  \let\oldparagraph\paragraph
  \renewcommand{\paragraph}[1]{\oldparagraph{#1}\mbox{}}
  \let\oldsubparagraph\subparagraph
  \renewcommand{\subparagraph}[1]{\oldsubparagraph{#1}\mbox{}}
\definecolor{shadecolor}{RGB}{241,243,245}
\newenvironment{Shaded}{\begin{snugshade}}{\end{snugshade}}
\newcommand{\AttributeTok}[1]{\textcolor[rgb]{0.40,0.45,0.13}{#1}}
\newcommand{\ConstantTok}[1]{\textcolor[rgb]{0.56,0.35,0.01}{#1}}
\newcommand{\DecValTok}[1]{\textcolor[rgb]{0.68,0.00,0.00}{#1}}
\newcommand{\FunctionTok}[1]{\textcolor[rgb]{0.28,0.35,0.67}{#1}}
\newcommand{\NormalTok}[1]{\textcolor[rgb]{0.00,0.23,0.31}{#1}}
\newcommand{\StringTok}[1]{\textcolor[rgb]{0.13,0.47,0.30}{#1}}
\providecommand{\tightlist}{%
  \setlength{\itemsep}{0pt}\setlength{\parskip}{0pt}}\usepackage{longtable,booktabs,array}
\patchcmd\longtable{\par}{\if@noskipsec\mbox{}\fi\par}{}{}
\def\maxwidth{\ifdim\Gin@nat@width>\linewidth\linewidth\else\Gin@nat@width\fi}
\def\maxheight{\ifdim\Gin@nat@height>\textheight\textheight\else\Gin@nat@height\fi}
\def\fps@figure{htbp}
\NewDocumentCommand\citeproctext{}{}
 \let\@cite@ofmt\@firstofone
 \def\@biblabel#1{}
 \def\@cite#1#2{{#1\if@tempswa , #2\fi}}
\newlength{\cslhangindent}
\newlength{\csllabelwidth}
\newenvironment{CSLReferences}[2] 
 {\begin{list}{}{%
  \setlength{\itemindent}{0pt}
  \setlength{\leftmargin}{0pt}
  \setlength{\parsep}{0pt}
  \ifodd #1
   \setlength{\leftmargin}{\cslhangindent}
   \setlength{\itemindent}{-1\cslhangindent}
  \fi
  \setlength{\itemsep}{#2\baselineskip}}}
 {\end{list}}
\def\tightlist{}
\newcommand\pD{$p\text{-}D$}
\newcommand\gD{$2\text{-}D$}
\newcommand\bD{$1\text{-}D$}
\newcommand\fD{$4\text{-}D$}
\newtheorem*{theorem}{Theorem}
  \renewcommand*\contentsname{Table of contents}
  \newcommand\contentsname{Table of contents}
  \renewcommand*\listfigurename{List of Figures}
  \newcommand\listfigurename{List of Figures}
  \renewcommand*\listtablename{List of Tables}
  \newcommand\listtablename{List of Tables}
  \renewcommand*\figurename{Figure}
  \newcommand\figurename{Figure}
  \renewcommand*\tablename{Table}
  \newcommand\tablename{Table}
\title{Is this normal? A new projection pursuit index to assess a sample
against a multivariate null distribution}
\author{
    Annalisa Calvi
   \\
    School of Mathematics \\
    Monash University \\
   \\
  \texttt{\href{mailto:annalisa.calvi@monash.edu}{\nolinkurl{annalisa.calvi@monash.edu}}} \\
   \And
    Ursula Laa
   \\
    Institute of Statistics \\
    University of Natural Resources and Life Sciences Vienna \\
   \\
  \texttt{\href{mailto:ursula.laa@boku.ac.at}{\nolinkurl{ursula.laa@boku.ac.at}}} \\
   \And
    Dianne Cook
   \\
    Department of Econometrics and Business Statistics \\
    Monash University \\
   \\
  \texttt{\href{mailto:dicook@monash.edu}{\nolinkurl{dicook@monash.edu}}} \\
  }
\begin{document}
\maketitle
\begin{abstract}
Many data problems contain some reference or normal conditions, upon
which to compare newly collected data. This scenario occurs in data
collected as part of clinical trials to detect adverse events, or for
measuring climate change against historical norms. The data is typically
multivariate, and often the normal ranges are specified by a
multivariate normal distribution. The work presented in this paper
develops methods to compare the new sample against the reference
distribution with high-dimensional visualisation. It uses a projection
pursuit guided tour to produce a sequence of low-dimensional projections
steered towards those where the new sample is most different from the
reference. A new projection pursuit index is defined for this purpose.
The tour visualisation also includes drawing of the projected ellipse,
which is computed analytically, corresponding to the reference
distribution. The methods are implemented in the R package,
\texttt{tourr}.
\end{abstract}

\setstretch{2}
\section{Introduction}\label{introduction}

Linear projections are useful in many aspects of statistical analysis of
multivariate data, and especially useful for visualising the data. A
linear projection provides a dimension reduction while maintaining
interpretability. For example, a biplot (Gabriel 1971; Gower and Hand
1996) shows the structure creating the maximum variance in the data, and
also visualizing the projection matrix to learn which variables
contribute to it. We might find clusters of outliers that were hiding in
high dimensions.

More generally, projection pursuit (Friedman and Tukey 1974; Jones and
Sibson 1987; Huber 1985) defines a quantitative criterion for the
interestingness of a projection (a projection pursuit index), and
searches the space of possible projections for the most interesting one
to display. We can also define sequences of interpolated linear
projections to better understand a multivariate distribution. Animating
a randomly selected interpolated sequence of linear projections is
called a grand tour (Asimov 1985; Buja and Asimov 1986; Buja et al.
2005; Cook et al. 2006; Lee et al. 2022). The combination of these two
approaches would then use a projection pursuit index to select
interesting projections, but display them via an interpolated path to
provide context. This is called a guided tour (Cook et al. 1995).

The question is whether we can use these techniques to assess new data
samples in the context of an established normal, such as a specific
multivariate normal distribution. In physics, the normal distribution
may describe experimental results, or a global fit for a selected model,
and we might want to compare to sets of other models. In medical
applications, the normal distribution might summarize historic data of a
healthy population and we compare it to samples from new patients. In
outlier detection we might use robust measures to define the normal
distribution and look for anomalies.

This paper describes a new projection pursuit index to find projections
where a new sample is most distant from the existing normal
distribution, that is integrated with the guided tour to visualise the
ways that the data departs from the reference distribution. This work
also expands on current outlier detection methods such as
Kandanaarachchi and Hyndman (2021) and Loperfido (2018) by providing
integrated multivariate visualisation to examine the outlyingness in
different projections. It also provides the ability to examine potential
outliers interactively - when they are outlying in different ways it can
interfere with detection numerically, but visually one might group
different types of outlyingness, and operate the outlier detection on
subsets.

The paper is organised as follows. Section~\ref{sec-background} provides
more context for the methods and visualisation.
Section~\ref{sec-anomaly-index} provides the details of the new index,
and rendering projections of the reference ellipse.
Section~\ref{sec-implementation} describes the implementation. Two
examples, one medical and one on climate extremes, illustrate usage in
Section~\ref{sec-examples}.

\section{Background}\label{sec-background}

Following the ideas expressed in Cook and Swayne (2007), to compare a
new sample with an existing norm, like a multivariate normal
distribution, in higher than two dimensions, you would use two samples
of points. The norm is represented by points on the surface of a
\(p\)-dimensional ellipsoid, with size corresponding to a chosen level
curve of the density function of the reference distribution. This can be
generated using these steps:

\begin{enumerate}
\def\labelenumi{\arabic{enumi}.}
\tightlist
\item
  Simulate a sample of observations (\(\mathbfit{x}\), which are \pD{}
  vectors) from \(N_p(\mathbfit{\mu}, \Sigma)\).
\item
  Transform each observation to have unit distance from the mean,
  \(\frac{\mathbfit{x}^\top}{||\mathbfit{x}^\top||}\). The sample is now
  uniform on the surface of the hypersphere.
\item
  Transform the shape from a sphere to an ellipsoid using the specific
  variance-covariance.
\item
  Shift the resulting sample to center it on the mean vector, and resize
  to correspond to the desired density level. This could also be treated
  like a confidence ellipse, where observations outside are more than,
  say, three standard deviations from the sample mean.
\end{enumerate}

New observations can be visually compared with this ellipsoid by
plotting them together using a tour, making sure that any
transformations used on the reference distribution are also applied
during pre-processing. Figure~\ref{fig-ci} illustrates this process for
\gD.

\begin{figure}

\centering{

\includegraphics[width=1\textwidth,height=\textheight]{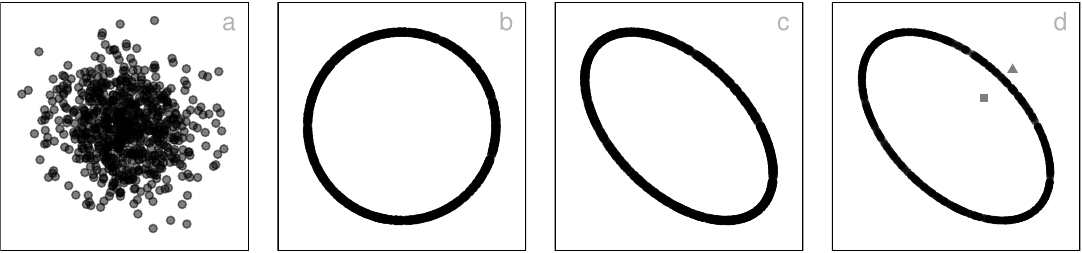}

}

\caption{\label{fig-ci}Simulating a uniform sample on a sphere involves
sampling from a multivariate normal (a) and transforming each
observation to have length equal to 1 (b). A 95\% confidence ellipsoid
is generated by transforming the sphere relative to a specified
variance-covariance matrix, and sizing using a critical value (c), and
new observations (triangle, square) can be visually assessed to be
inside or outside by plotting with the ellipsoid (d).}

\end{figure}%

\begin{figure}

\begin{minipage}{0.50\linewidth}

\centering{

\includegraphics[width=2.60417in,height=\textheight]{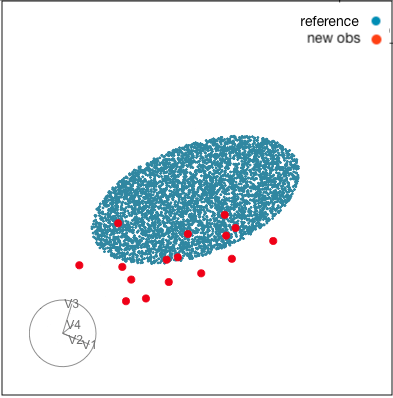}

}

\subcaption{\label{fig-compare1}Reference ellipse}

\end{minipage}%
\begin{minipage}{0.50\linewidth}

\centering{

\includegraphics[width=2.60417in,height=\textheight]{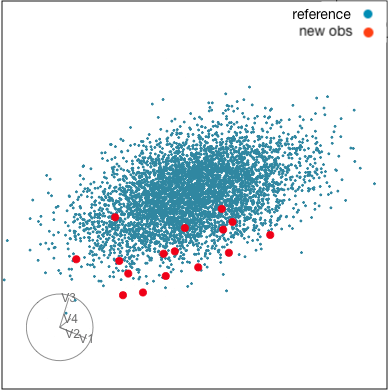}

}

\subcaption{\label{fig-compare}Reference sample}

\end{minipage}%

\caption{\label{fig-compare}Illustration of existing procedures for a
\(4\text{-}D\) problem: compare new sample with (a) points on the
surface of a \(p\text{-}D\) ellipse representing the reference
distribution, (b) a sample of observations simulated from the reference
distribution. A single \(2\text{-}D\) projection from a tour. (The
circle and projected axes represent projection coefficients, and
animated gifs of these tours are available in the supplementary
material.) Both approaches are useful, but need the refinements
described in this paper.}

\end{figure}%

This approach for comparing new observations with a reference
distribution can be applied to any \pD{} problem.
Figure~\ref{fig-compare} illustrates the procedures for a \fD{} problem.
The new sample of observations is compared against the reference
distribution in two ways (a) against a \fD{} ellipse corresponding to a
level curve of density function of the reference distribution and (b)
against a sample simulated from the reference distribution. Here we show
a single \gD{} projection from a tour of the full \pD{}. The circle and
line segments indicate the coefficients of the \(4\times 2\) basis used
to project the data in these plots. The difference between the new
sample of points (dark/red) from the sample is in the bottom left to top
right direction, corresponding mostly to the V3 axis direction. Roughly
orthogonal to this is a combination of V1 and V2, so the difference
between new and reference as indicated by this projection is due to V4
contrasted with V1 and V2. The ellipse makes it clearer but the sample
can be useful, too, particularly if the reference distribution is not a
multivariate normal distribution. (The supplementary material includes
animated gifs showing these data in a tour, allowing one to see many
different projections of the \fD{} points.)

While this is a flexible and useful approach, there are two problems:
(1) the ragged edges of the points marking the projected ellipse make it
difficult to compare the new sample precisely against acceptable ranges,
and (2) there is no projection pursuit index that can guide the tour
towards the directions (projections) where the samples are most
different from the reference.

What would be better, and what is developed in this paper, is to
analytically define the confidence ellipsoid and display the projected
ellipsoid as a geometric shape rather than a sample of points. Further,
a new projection pursuit index that is based on flagging observations
that are outside, and would steer the tour to projections that reveal
the extent of the difference. These new procedures are described in the
next section.

\section{Anomaly index}\label{sec-anomaly-index}

\subsection{Projecting an ellipsoid}\label{projecting-an-ellipsoid}

Let \(\mathbfit{x}\) be a \pD{} vector. A \pD{} ellipsoid corresponding
to a given variance-covariance (\(\Sigma\)) and mean vector
(\(\mathbfit{\mu}\)) is described by the equation

\begin{equation}
(\mathbfit{x}-\mathbfit{\mu}) \Sigma^{-1}(\mathbfit{x}-\mathbfit{\mu})^T = c^2
\label{eq:pd}
\end{equation}

where \(c\) is a constant that depends on a specific confidence level.

Let \(P\) be a (\(p\times 2\)) matrix, whose columns form an orthonormal
basis for the \gD{} projection space. Let \(\mathbfit{y}\) be a \gD{}
vector, and let \(\mathbfit{\mu}_p := \mathbfit{\mu} P\) denote the
projected mean.

\begin{theorem}
  The projection of the \pD{} ellipsoid described in Equation~\ref{eq:pd} onto the \gD{} projection space described by $P$ has the equation
\begin{equation}
(\mathbfit{y} - \mathbfit{\mu}_p)(P^T \Sigma P)^{-1}(\mathbfit{y} - \mathbfit{\mu}_p)^T = c^2.
\end{equation}
\end{theorem}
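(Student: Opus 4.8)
The plan is to reduce to a centered ellipsoid, characterise exactly which surface points of the ellipsoid map to the boundary of its two-dimensional shadow, and then substitute that characterisation back into the defining quadratic form. First I would strip out the mean: since the projection map $\mathbfit{x}\mapsto\mathbfit{x}P$ is linear and $\mathbfit{\mu}_p=\mathbfit{\mu}P$, setting $\mathbfit{z}=\mathbfit{x}-\mathbfit{\mu}$ and $\mathbfit{w}=\mathbfit{y}-\mathbfit{\mu}_p$ gives $\mathbfit{w}=\mathbfit{z}P$ and turns Equation~\ref{eq:pd} into $\mathbfit{z}\Sigma^{-1}\mathbfit{z}^T=c^2$. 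It then suffices to show that the boundary of the shadow of this centered ellipsoid is $\mathbfit{w}(P^T\Sigma P)^{-1}\mathbfit{w}^T=c^2$.

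The conceptual heart is pinning down what ``projection of the ellipsoid'' means. What is seen in the tour is the shadow of the solid ellipsoid, and its boundary is the image of the contour generator, namely those surface points whose outward normal is orthogonal to every fibre (projection) direction. The fibre directions are precisely $\mathrm{col}(P)^\perp$, and the normal at $\mathbfit{z}$ is proportional to $\Sigma^{-1}\mathbfit{z}^T$, so a boundary point must satisfy $\Sigma^{-1}\mathbfit{z}^T\in\mathrm{col}(P)$, i.e.\ $\mathbfit{z}^T=\Sigma P\boldsymbol{\lambda}$ for some $\boldsymbol{\lambda}\in\mathbb{R}^2$. I expect this silhouette characterisation to be the main obstacle, since it is the only genuinely geometric step; everything after it is substitution.

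The remaining algebra is routine. Projecting the silhouette condition gives $\mathbfit{w}^T=P^T\mathbfit{z}^T=(P^T\Sigma P)\boldsymbol{\lambda}$, hence $\boldsymbol{\lambda}=(P^T\Sigma P)^{-1}\mathbfit{w}^T$; and substituting $\mathbfit{z}^T=\Sigma P\boldsymbol{\lambda}$ into the constraint collapses $\mathbfit{z}\Sigma^{-1}\mathbfit{z}^T$ to $\boldsymbol{\lambda}^T(P^T\Sigma P)\boldsymbol{\lambda}$, which equals $\mathbfit{w}(P^T\Sigma P)^{-1}\mathbfit{w}^T$. Setting this to $c^2$ and reinstating the mean yields the stated equation. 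Two checks finish the argument: $P^T\Sigma P$ is positive definite because $\Sigma$ is positive definite and $P$ has full column rank, so its inverse exists and the level set is a genuine ellipse.

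Finally, I would record a cleaner route that bypasses the contour-generator argument, via support functions. The orthogonal shadow of a convex body has support function equal to the restriction of the body's support function to $\mathrm{col}(P)$. For the solid ellipsoid this support function is $\mathbfit{u}\mapsto c\sqrt{\mathbfit{u}^T\Sigma\mathbfit{u}}$, and restricting it via $\mathbfit{u}=P\mathbfit{v}$ gives $\mathbfit{v}\mapsto c\sqrt{\mathbfit{v}^TP^T\Sigma P\mathbfit{v}}$, which is exactly the support function of the solid ellipse $\mathbfit{w}(P^T\Sigma P)^{-1}\mathbfit{w}^T\le c^2$. Equal support functions force the two convex sets, and hence their bounding ellipses, to coincide, giving the same conclusion without appealing to the silhouette description.
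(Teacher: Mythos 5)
Your first argument is correct and is essentially the paper's own proof: the paper likewise characterises the boundary of the projection as the image of those surface points where the tangent plane is perpendicular to the projection plane, i.e.\ where the gradient $2(\mathbfit{x}-\mathbfit{\mu})\Sigma^{-1}$ lies in the span of the columns of $P$, then solves $\mathbfit{x}-\mathbfit{\mu}=\mathbfit{s}P^T\Sigma$ and substitutes back exactly as you do (your $\boldsymbol{\lambda}$ is the paper's $\mathbfit{s}^T$, and your preliminary centering is a cosmetic difference). You add two things the paper does not have. First, the explicit check that $P^T\Sigma P$ is positive definite, hence invertible --- the paper uses this silently. Second, and more substantively, the support-function argument is a genuinely different route: both your contour-generator step and the paper's equivalent assertion (``$\mathbfit{x}P$ is on the boundary if and only if the tangent plane at $\mathbfit{x}$ is perpendicular to the projection plane'') are stated without proof, and you rightly identify this as the only real geometric obstacle. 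The support-function route bypasses it entirely, since the identity that the shadow's support function is the restriction $\mathbfit{v}\mapsto h_K(P\mathbfit{v})$ of the body's support function is an elementary fact about suprema; it also yields slightly more, namely that the shadow coincides with the solid ellipse as a convex set rather than merely having its boundary contained in the stated curve. If you were to write one of the two up in full, the support-function version is the more self-contained choice.
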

\begin{proof}
The projection of a hyperellipsoid onto \gD{} space is an ellipse. The boundary of this \gD{} ellipse is the orthogonal projection of select points on the boundary of the hyperellipsoid. For $\mathbfit{x}$ on the hyperellipsoid, its orthogonal projection $\mathbfit{x}P$ is on the boundary of the projected ellipse if and only if the tangent plane to the hyperellipsoid at $\mathbfit{x}$ is perpendicular to the projection plane described by $P$. As the hyperellipsoid is a level curve, described by Equation~(\ref{eq:pd}), this means the gradient of the left hand side of Equation~(\ref{eq:pd}) at $\mathbfit{x}$ is parallel to the projection plane. Note that any vector parallel to the projection plane can be expressed as $2\mathbfit{s}P^T$ for some $\mathbfit{s} \in \mathbb{R}^2$.

Therefore, the boundary of the ellipse consists of the points $\mathbfit{x}P$ for which $\mathbfit{x}$ satisfies

\begin{equation}
2\mathbfit{s}P^T = \nabla \left[(\mathbfit{x}-\mathbfit{\mu}) \Sigma^{-1}(\mathbfit{x}-\mathbfit{\mu})^T\right] = 2 (\mathbfit{x}-\mathbfit{\mu}) \Sigma^{-1}
\label{eq:grad}
\end{equation}

for some $\mathbfit{s} \in \mathbb{R}^2$. Rearranging this equation gives $\mathbfit{x}-\mathbfit{\mu} = \mathbfit{s} P^T \Sigma$.
Making this substitution in Equation~(\ref{eq:pd}) yields

\begin{equation}
c^2 = (\mathbfit{x}-\mathbfit{\mu}) \Sigma^{-1}(\mathbfit{x}-\mathbfit{\mu})^T = \mathbfit{s} P^T \Sigma P \mathbfit{s}^T
\label{eq:c2}
\end{equation}

We denote points on the boundary of the projected ellipse by $\mathbfit{y}$, so that $\mathbfit{y} = \mathbfit{x}P$ for some $\mathbfit{x}$ satisfying Equation~(\ref{eq:grad}). Then, as $\mathbfit{\mu}_p = \mathbfit{\mu} P$ by definition, $\mathbfit{y} - \mathbfit{\mu}_p = (\mathbfit{x} - \mathbfit{\mu})P = \mathbfit{s} P^T \Sigma P$, again substituting $\mathbfit{x}-\mathbfit{\mu} = \mathbfit{s} P^T \Sigma$. We then substitute $(\mathbfit{y} - \mathbfit{\mu}_p) (P^T \Sigma P)^{-1}$ for $\mathbfit{s}$ in Equation~(\ref{eq:c2}). From this we can compute the analog equation for the projection as

\begin{equation}
(\mathbfit{y} - \mathbfit{\mu}_p)(P^T \Sigma P)^{-1}(\mathbfit{y} - \mathbfit{\mu}_p)^T = c^2
\end{equation}

as claimed.
\end{proof}

This means the matrix \((P^T \Sigma P)^{-1}\) defines the ellipse in the
\gD{} projection. In general \(c\) could be any constant, but typically
we would select it as a quantile of the \(\chi^2\) distribution, so that
the size of the ellipse corresponds to a selected probability.

\subsection{Index specification}\label{index-specification}

Mahalanobis distance (Mahalanobis 1936) measures the distance from the
center \(\mathbfit{\mu}\) in units of standard deviations, and is
computed using the variance-covariance matrix \(\Sigma\).

\begin{equation}
d_M (\mathbfit{x}) = \sqrt{(\mathbfit{x}-\mathbfit{\mu}) \Sigma^{-1}(\mathbfit{x}-\mathbfit{\mu})^T}
\end{equation}

To define a measure of an interesting projection we propose to use the
average Mahalanobis distance in the projection, for a subset of points,
\(W\). The set of points could be chosen in different ways, but the
default is those that are outside the specified ellipsoid in \pD{},
\(W = \{\mathbfit{x}: (\mathbfit{x}-\mathbfit{\mu}) \Sigma^{-1}(\mathbfit{x}-\mathbfit{\mu})^T > c^2\}\).
Alternatives could be to select a set of observations with the largest
Mahalanobis distance, manually select observations or possibly a group
of points identified via clustering of the extremes.

We define our new index as

\begin{equation}
\sum_{\mathbfit{w} \in W} (\mathbfit{w} - \mathbfit{\mu}) P (P^T\Sigma P)^{-1}P^T(\mathbfit{w} - \mathbfit{\mu})^T.
\end{equation}

\subsection{Additional considerations}\label{additional-considerations}

If the observations in \(W\) are primarily departing from the normal
range in the same direction, the index will be expected to perform well
in finding this average direction. However, if the observations have
very different departures from the norm, it may be useful to break them
into groups, and separately optimize on these subsets. One could
consider clustering these observations using angular distance to find
groups of observations that have similar directions of departure. This
is illustrated in the second example, see Section~\ref{sec-weather}.

\section{Implementation}\label{sec-implementation}

This is implemented in the \texttt{tourr} (Wickham et al. 2011, 2024)
package, where the projected ellipsoid can be drawn for each projection.
The guided tour will take arguments specifying the data, and the null
variance-covariance matrix, as shown in the example code.

\begin{Shaded}
\begin{Highlighting}[]
\FunctionTok{library}\NormalTok{(tourr)}
\FunctionTok{animate\_xy}\NormalTok{(samp, }\FunctionTok{guided\_anomaly\_tour}\NormalTok{(}\FunctionTok{anomaly\_index}\NormalTok{(),}
  \AttributeTok{ellipse=}\NormalTok{vc\_null), }\AttributeTok{ellipse=}\NormalTok{vc\_null, }
  \AttributeTok{axes =} \StringTok{"bottomleft"}\NormalTok{, }\AttributeTok{half\_range=}\DecValTok{5}\NormalTok{, }\AttributeTok{center=}\ConstantTok{FALSE}\NormalTok{)}
\end{Highlighting}
\end{Shaded}

A random projection as well as the final view after optimization are
shown in Figure Figure~\ref{fig-anomaly}. The optimal projection (b) can
be used to understand what combination of variables is driving the
difference from the reference distribution.

\begin{figure}

\begin{minipage}{0.50\linewidth}

\centering{

\includegraphics[width=2.60417in,height=\textheight]{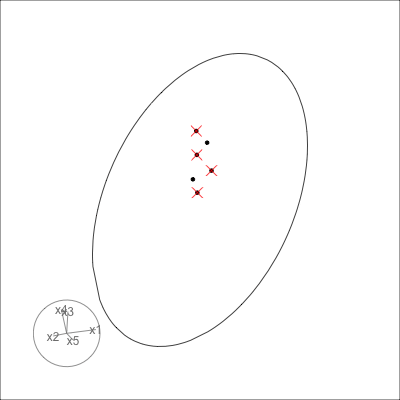}

}

\subcaption{\label{fig-anomaly1}Random projection}

\end{minipage}%
\begin{minipage}{0.50\linewidth}

\centering{

\includegraphics[width=2.60417in,height=\textheight]{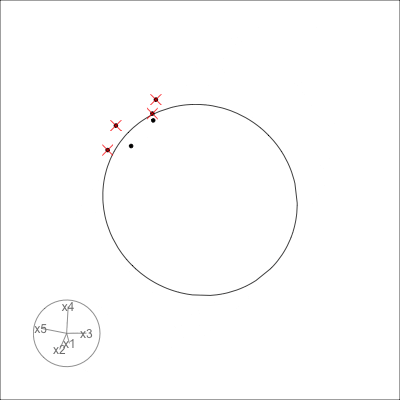}

}

\subcaption{\label{fig-anomaly2}Optimal projection}

\end{minipage}%

\caption{\label{fig-anomaly}Two projections of simulated example data
corresponding to the sample code: (a) random projection where sample is
inside the 2-D ellipse, (b) optimal projection from index, showing most
of the sample outside. A red cross indicates that the point is outside
the p-D ellipse. The optimal projection uses mostly variables
\(x_4, x_5\), which is expected because these are the two directions
where the sample most differs from the norm.}

\end{figure}%

To optimize the index any optimization algorithm implemented in the
\texttt{tourr} package can be used. The focus here is on the final
projection identified, and direct interpretation of index values is not
considered. It may however be interesting to compare different
optimization strategies and compare the index values along the
optimization paths, as available in Zhang et al. (2021).

\section{Examples}\label{sec-examples}

\subsection{Health: liver function
tests}\label{health-liver-function-tests}

This example is motivated by a problem posed during consulting with a
pharmaceutical company. The data shown here is simulated, for privacy
purposes, but reflects the patterns seen in the clinical data. Liver
function tests commonly provide measurements on albumin, protein,
bilirubin, gamma-glutamyl-transferase (GGT), aspartate aminotransferase
(AST), alkaline phosphatase (ALP) and alanine aminotransferase (ALT).
There are normal ranges on these measurements reported by Lala, Zubair,
and Minter (2023) and listed in Table~\ref{tbl-liver-norms}.

\begin{longtable}[]{@{}
  >{\raggedright\arraybackslash}p{(\columnwidth - 14\tabcolsep) * \real{0.0421}}
  >{\raggedleft\arraybackslash}p{(\columnwidth - 14\tabcolsep) * \real{0.1474}}
  >{\raggedleft\arraybackslash}p{(\columnwidth - 14\tabcolsep) * \real{0.1474}}
  >{\raggedleft\arraybackslash}p{(\columnwidth - 14\tabcolsep) * \real{0.2000}}
  >{\raggedleft\arraybackslash}p{(\columnwidth - 14\tabcolsep) * \real{0.1158}}
  >{\raggedleft\arraybackslash}p{(\columnwidth - 14\tabcolsep) * \real{0.1158}}
  >{\raggedleft\arraybackslash}p{(\columnwidth - 14\tabcolsep) * \real{0.1158}}
  >{\raggedleft\arraybackslash}p{(\columnwidth - 14\tabcolsep) * \real{0.1158}}@{}}

\caption{\label{tbl-liver-norms}Normal ranges provided for liver
function tests.}

\tabularnewline

\toprule\noalign{}
\begin{minipage}[b]{\linewidth}\raggedright
\end{minipage} & \begin{minipage}[b]{\linewidth}\raggedleft
albumin (g/L)
\end{minipage} & \begin{minipage}[b]{\linewidth}\raggedleft
protein (g/L)
\end{minipage} & \begin{minipage}[b]{\linewidth}\raggedleft
bilirubin (µmol/L)
\end{minipage} & \begin{minipage}[b]{\linewidth}\raggedleft
GGT (IU/L)
\end{minipage} & \begin{minipage}[b]{\linewidth}\raggedleft
AST (IU/L)
\end{minipage} & \begin{minipage}[b]{\linewidth}\raggedleft
ALP (IU/L)
\end{minipage} & \begin{minipage}[b]{\linewidth}\raggedleft
ALT (IU/L)
\end{minipage} \\
\midrule\noalign{}
\endhead
\bottomrule\noalign{}
\endlastfoot
min & 35 & 60 & 0 & 2 & 5 & 30 & 5 \\
max & 50 & 80 & 20 & 44 & 30 & 120 & 40 \\

\end{longtable}

These measurements are also likely correlated, based on guidance like
the \emph{ratio of AST to ALT of 2:1 indicates possible alcohol abuse}.
When a correlation matrix for normal patients is provided the null
ellipse can be computed to be used to examine new samples.

Figure~\ref{fig-liver} illustrates two examples. The first is similar to
the consulting project. A sample of liver test scores for new patients
was provided in order to examine their values relative to the normal
range. Here only four tests are used, GGT, AST, ALP, ALT. Plot (a) shows
a projection of this sample relative to the normal range, using the
anomaly index guided tour. Some of the patients are outside the 4D
confidence ellipse but all of the patients are off center, located away
from the mean. What has been typical in the past is to compute normal
values based on tests of healthy young males. Although this example is
simulated, it matches what was learned during the project, that the
samples examined were all recorded on women, and they were
systematically different from the normal.

\begin{figure}

\begin{minipage}{0.50\linewidth}

\centering{

\includegraphics[width=2.60417in,height=\textheight]{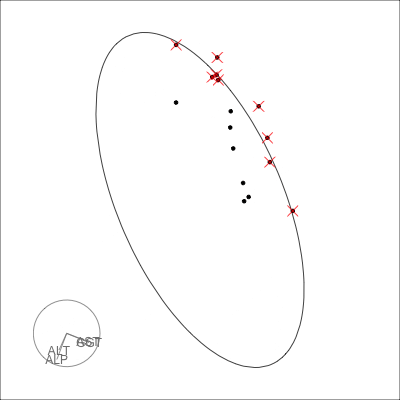}

}

\subcaption{\label{fig-liver1}Sample of female patients}

\end{minipage}%
\begin{minipage}{0.50\linewidth}

\centering{

\includegraphics[width=2.60417in,height=\textheight]{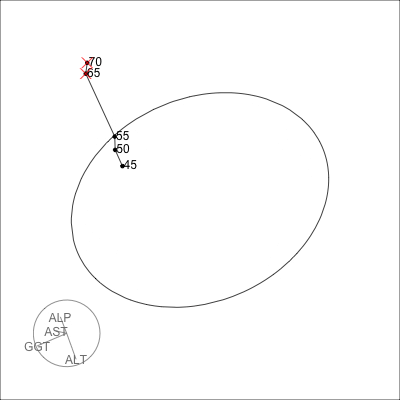}

}

\subcaption{\label{fig-liver2}Longitudinal profile of an aging patient}

\end{minipage}%

\caption{\label{fig-liver}Two projections of simulated example data
corresponding to common liver tests: (a) sample of female patients, (b)
longitudinal test results for a single patient. Red cross indicates
observation is outside the 4-D confidence ellipse. The female patients,
the whole group (a), are systematically different from the normal range.
As the patient ages (b), the level of ALP increases and ALT decreases.}

\end{figure}%

The second example shows a longitudinal record of a single patient,
measured at ages 45, 50, 55, 65 and 70. The lines connect the records in
time order. The projection corresponds to the maximum from a
projection-pursuit guided tour using the anomaly index.

From the axes representation in Figure~\ref{fig-liver} (b) of this
projection, we can see that the direction that profile extends is
primarily contrasting ALT (fourth row) and ALP (third row). This is
consistent with aging, where ALP increases and ALT decreases. Although
this is simulated data, it matches what was observed in the clinical
data that ALP and ALT test scores systematically change with age.

These two examples illustrate how visualizing the data using the anomaly
guided tour can help to understand differences between groups of
correlated test scores.

\subsection{Robust statistics: weather extremes}\label{sec-weather}

This example is motivated by the weather data example from Mayrhofer and
Filzmoser (2023). We illustrate using the anomaly index to compare
potential outliers with a reference normal distribution that is derived
using robust methods on the original data. The data contains 16
numerical measurements that are averages across the three summer months
June, July and August, reported for 68 years (1955 - 2022) (see
Mayrhofer and Filzmoser (2023) for more details).

To estimate the underlying normal distribution the data is first
centered and scaled using the median and the median absolute deviation
(MAD), before applying the minimum covariance determinant (MCD)
estimator (Rousseeuw 1985) using the implementation in Maechler et al.
(2024). The MCD estimates for the mean vector and the
variance-covariance matrix are then used to define the reference normal
distribution.

Here we will consider points to be outlying if they are more than
\(5 \sigma\) away from that mean value. With \(p=16\) this corresponds
to a Mahalanobis distance of \(60\) or larger. This will identify \(20\)
out of the \(68\) observations as outlying. Since these are outlying in
different combinations of variables, as found in Mayrhofer and Filzmoser
(2023), we will further separate the outlying points into clusters based
on similarity in direction. The similarity is computed by first
normalizing observations to have length \(1\) and then apply k-means
clustering with Euclidean distance. We use the Dunn index (Halkidi,
Batistakis, and Vazirgiannis 2002), computed using Hennig (2024), to
select the preferred number of clusters, \(k=5\).

The new anomaly index is first applied to the full dataset, such that
the final projection will be found by averaging distance of points in
many directions, see Figure~\ref{fig-weather} (a). It provides a global
picture showing where the outlying points differ from the normal
distribution. We can see that the contrast between maximum (atmx) and
minimum (atmn) temperature is primarily used in the direction where
points are most outlying. In the orthogonal direction the variables
precipitation (prc) and number of days with maximum temperature above
25°C (nsm) can be used to separate the clusters 2, 3 and 5. With this
solution the two points in cluster 4 cannot be resolved, they fall
inside the ellipse. To better understand what is different for those two
years we separately run the anomaly index for this subset, see
Figure~\ref{fig-weather} (b). From the result we see that the maximum
temperature is not relevant for this cluster, but that they have low
values for the minimum temperature. We also see that there is still some
averaging, and the preferred solution is not placing either point on the
outside of the projected ellipse.

\begin{figure}

\begin{minipage}{0.50\linewidth}

\centering{

\includegraphics[width=2.86458in,height=\textheight]{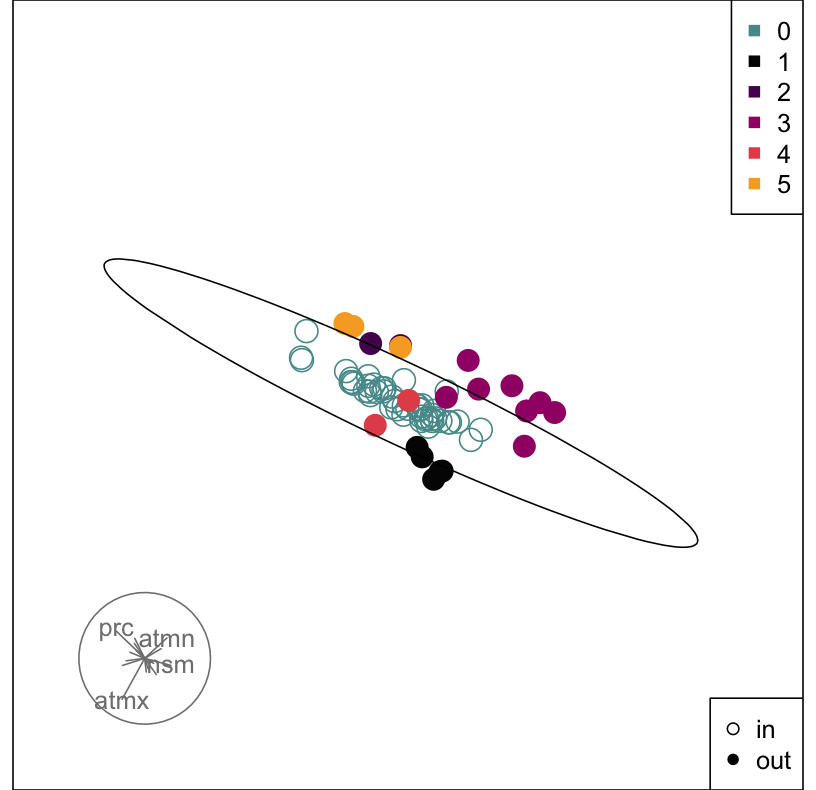}

}

\subcaption{\label{fig-weather1}All outlying points clustered}

\end{minipage}%
\begin{minipage}{0.50\linewidth}

\centering{

\includegraphics[width=2.86458in,height=\textheight]{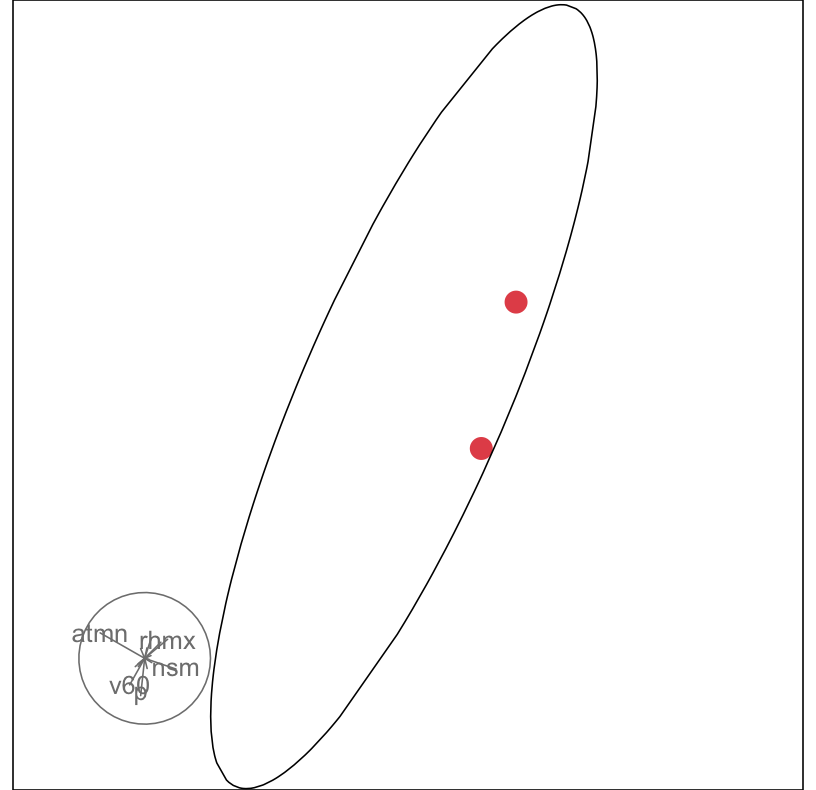}

}

\subcaption{\label{fig-weather2}Focus on cluster 4}

\end{minipage}%

\caption{\label{fig-weather}Examining weather anomalies using the
anomaly index guided tour to assess what combinations of variables are
creating the extremes. Anomalies were grouped based on similarity in
direction, and colour indicates cluster membership. In (a) the anomaly
index was computed on all outlying points, and most clusters (except for
cluster 4) are outlying points are somewhat orthogonal to the normal
ellipse. This extremeness is primarily due to temperature min
(\texttt{atmn}) and max (\texttt{atmx}) variables as read from the axes.
In (b) only the two points in cluster 4 are used for computing the
index. These two observations are primarily extreme in the minimum
temperature (\texttt{atmn}) variable, and have lower than normal
minima.}

\end{figure}%

To compare our results to what is obtained by outlier detection methods,
we reproduce one of the results from Mayrhofer and Filzmoser (2023) in
Figure~\ref{fig-shapley}. On top we can see the cluster assignments for
all points that we have identified as outliers using the
\(p\)-dimensional Mahalanobis distance.

For example we see that cluster 4 contains the years 1962 and 1996, and
for these two years the result from Mayrhofer and Filzmoser (2023) also
indicates unusually low values for the minimum temperature (atmn), while
there is disagreement between outlyingness in other variables. Note that
compared to the years 1965 and 1966 they have unusually low minimum
temperature (atmn) while having average values for the number of days
with maximum temperature above 25°C (nsm), and this contrast is also
captured in the projection.

When looking at the overall picture (Figure~\ref{fig-weather} (a)), we
can see that the linear projection is showing the difference between
minimum (atmn) and maximum temperature (atmx), which is where clusters
are outlying in different directions. The direction of cluster 1 is
where the average maximum temperature (atmx) is unusually higher
compared to the minimum temperature (atmn), while clusters 3 corresponds
to years where the minimum temperature is unusually close to the maximum
temperature. This combined information cannot be resolved with the cell
detection algorithm, but we see that for example the most recent years
(cluster 3) were tagged for having unusually high values for both the
minimum (atmn) and the maximum temperature (atmx).

\begin{figure}

\centering{

\includegraphics[width=5.72917in,height=\textheight]{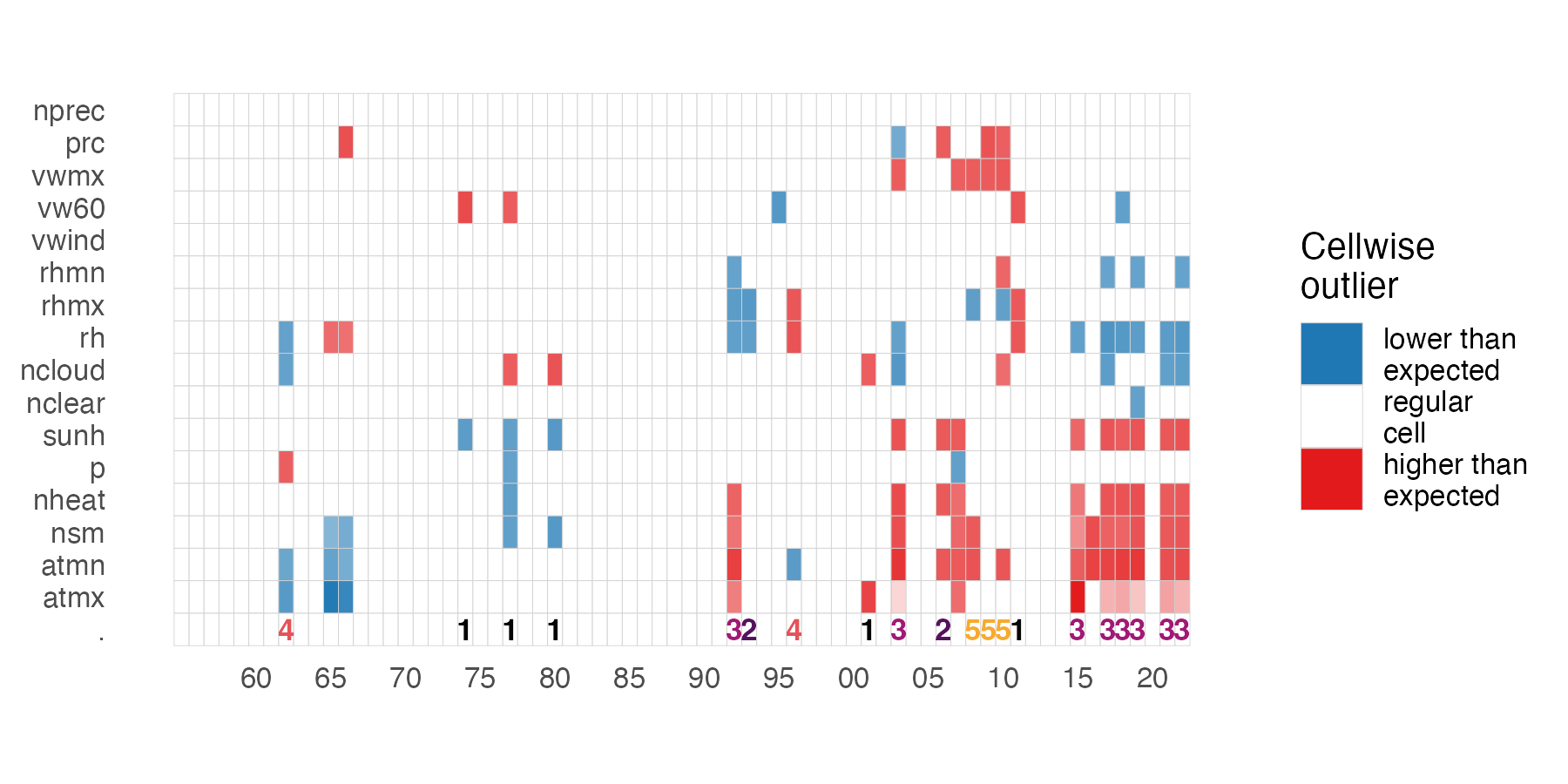}

}

\caption{\label{fig-shapley}Reproduction of the results from Mayrhofer
and Filzmoser (2023), showing which cells have been identified as
outlying by their Shapley Cell Detector algorithm (Mayrhofer and
Filzmoser 2023). Red cells indicate values higher than expected, blue
cells values lower than expected. For years that have been identified as
outliers in our approach we show the cluster assignment in the bottom
row.}

\end{figure}%

\section{Conclusion}\label{conclusion}

This paper has provided a new projection pursuit index for comparing a
sample of data against a multivariate normal reference distribution. It
has also provided an analytical result for drawing \gD{} projections of
the \pD{} ellipse corresponding to the normal reference distribution.
These combined provide new ways to visualise multivariate data, for this
particular scenario.

The work is related to outlier detection methods (see, Mayrhofer and
Filzmoser 2023, for example), particularly those that use robust
statistics, as illustrated in the weather example. Methods discussed in
Rousseeuw, Raymaekers, and Hubert (2018) and Donoho and Gasko (1992)
describe detecting outlyingness using projections of the data. When
these projections are collected and used as a large set, they define a
high-dimensional confidence region of an indeterminate shape. There is
no analytical definition. However, we could explore the data relative to
the high-dimensional shapes using the same approach described in
Section~\ref{sec-background}. You would generate points on the surface
of the irregular shape, overlay the data on this to visualise with a
tour.

The methods on directional outlyingness or Stahel-Donoho outlyingness,
described above, lend themselves to new potential \bD{} projection
pursuit indexes. These would integrate nicely into a \bD{} projection
pursuit guided tour to interactively visualise the potential outliers
relative to the rest of the sample.

Another potential direction of this work is with model-based clustering
using Gaussian mixtures (Fraley and Raftery 2002), where ellipses form
the basis of the model. Currently, the \texttt{mclust} software (Scrucca
et al. 2023) has the capacity to show \gD{} ellipses for two variables,
or axis parallel \gD{} ellipses for \pD{}. The equations developed here
for drawing the \gD{} projection of a \pD{} ellipse would provide more
versatile visualisation for examining the model-based clustering fits.

\section*{Acknowledgements}\label{acknowledgements}
\addcontentsline{toc}{section}{Acknowledgements}

The work of first author Calvi was supported by the ResearchFirst
undergraduate research program at Monash University. German Valencia
advised on the project, in relation to physics applications.

\section*{Supplementary Materials}\label{supplementary-materials}
\addcontentsline{toc}{section}{Supplementary Materials}

The full paper, code and data to reproduce the work, are publicly
available at https://github.com/uschiLaa/anomaly\_ppi.

\section*{References}\label{references}
\addcontentsline{toc}{section}{References}

\phantomsection\label{refs}
\begin{CSLReferences}{1}{0}
\bibitem[\citeproctext]{ref-As85}
Asimov, Daniel. 1985. {``{T}he {G}rand {T}our: {A} {T}ool for {V}iewing
{M}ultidimensional {D}ata.''} \emph{SIAM Journal of Scientific and
Statistical Computing} 6 (1): 128--43.

\bibitem[\citeproctext]{ref-BA86b}
Buja, Andreas, and Daniel Asimov. 1986. {``{G}rand {T}our {M}ethods:
{A}n {O}utline.''} \emph{Computing Science and Statistics} 17: 63--67.

\bibitem[\citeproctext]{ref-BCAH05}
Buja, Andreas, Dianne Cook, Daniel Asimov, and Catherine Hurley. 2005.
{``{C}omputational {M}ethods for {H}igh-{D}imensional {R}otations in
{D}ata {V}isualization.''} In \emph{Handbook of Statistics: Data Mining
and Visualization}, edited by C. R. Rao, E. J. Wegman, and J. L. Solka,
391--414. Amsterdam, The Netherlands: Elsevier/North-Holland.

\bibitem[\citeproctext]{ref-cook1995}
Cook, Dianne, Andreas Buja, Javier Cabrera, and Catherine Hurley. 1995.
{``Grand Tour and Projection Pursuit.''} \emph{Journal of Computational
and Graphical Statistics} 4 (3): 155--72.
\url{https://doi.org/10.1080/10618600.1995.10474674}.

\bibitem[\citeproctext]{ref-CLBW06}
Cook, Dianne, Eun-Kyung Lee, Andreas Buja, and Hadley Wickham. 2006.
{``{G}rand {T}ours, {P}rojection {P}ursuit {G}uided {T}ours and {M}anual
{C}ontrols.''} In \emph{Handbook of {D}ata {V}isualization}, edited by
C.-H. Chen, W. Härdle, and A. Unwin. Berlin: Springer.

\bibitem[\citeproctext]{ref-CS07}
Cook, Dianne, and Deborah F. Swayne. 2007. \emph{Interactive and Dynamic
Graphics for Data Analysis: With {R} and {GGobi}}. Use r! New York:
Springer-Verlag. \url{https://doi.org/10.1007/978-0-387-71762-3}.

\bibitem[\citeproctext]{ref-donoho1982}
Donoho, David L., and Miriam Gasko. 1992. {``{Breakdown Properties of
Location Estimates Based on Halfspace Depth and Projected
Outlyingness}.''} \emph{The Annals of Statistics} 20 (4): 1803--27.
\url{https://doi.org/10.1214/aos/1176348890}.

\bibitem[\citeproctext]{ref-FR02}
Fraley, Chris, and Adrian Raftery. 2002. {``Model-Based {C}lustering,
{D}iscriminant {A}nalysis, {D}ensity {E}stimation.''} \emph{Journal of
the American Statistical Association} 97: 611--31.

\bibitem[\citeproctext]{ref-FT74}
Friedman, Jerome H., and John W. Tukey. 1974. {``{A} {P}rojection
{P}ursuit {A}lgorithm for {E}xploratory {D}ata {A}nalysis.''} \emph{IEEE
Transactions on Computing C} 23: 881--89.

\bibitem[\citeproctext]{ref-Ga71}
Gabriel, K. Ruben. 1971. {``The {B}iplot {G}raphical {D}isplay of
{M}atrices with {A}pplications to {P}rincipal {C}omponent {A}nalysis.''}
\emph{Biometrika} 58: 453--67.

\bibitem[\citeproctext]{ref-GH96}
Gower, John C., and David J. Hand. 1996. \emph{Biplots}. London:
Chapman; Hall.

\bibitem[\citeproctext]{ref-dunn}
Halkidi, Maria, Yannis Batistakis, and Michalis Vazirgiannis. 2002.
{``Clustering Validity Checking Methods: Part II.''} \emph{ACM SIGMOD
Record} 31 (September). \url{https://doi.org/10.1145/601858.601862}.

\bibitem[\citeproctext]{ref-fpc}
Hennig, Christian. 2024. \emph{Fpc: Flexible Procedures for Clustering}.
\url{https://CRAN.R-project.org/package=fpc}.

\bibitem[\citeproctext]{ref-Hu85}
Huber, Peter J. 1985. {``{P}rojection {P}ursuit (with Discussion).''}
\emph{Annals of Statistics} 13: 435--525.

\bibitem[\citeproctext]{ref-JS87}
Jones, M. C., and Robin Sibson. 1987. {``{W}hat Is {P}rojection
{P}ursuit? (With Discussion).''} \emph{Journal of the Royal Statistical
Society, Series A} 150: 1--36.

\bibitem[\citeproctext]{ref-Kandanaarachchi02012021}
Kandanaarachchi, Sevvandi, and Rob J. Hyndman. 2021. {``Dimension
Reduction for Outlier Detection Using DOBIN.''} \emph{Journal of
Computational and Graphical Statistics} 30 (1): 204--19.
https://doi.org/\url{https://doi.org/10.1080/10618600.2020.1807353}.

\bibitem[\citeproctext]{ref-lib-med-liver-norms}
Lala, Vasimahmed, Muhammad Zubair, and David A. Minter. 2023. {``Liver
Function Tests.''} \url{https://www.ncbi.nlm.nih.gov/books/NBK482489/}.

\bibitem[\citeproctext]{ref-tours2022}
Lee, Stuart, Dianne Cook, Natalia da Silva, Ursula Laa, Nicholas
Spyrison, Earo Wang, and H. Sherry Zhang. 2022. {``The State-of-the-Art
on Tours for Dynamic Visualization of High-Dimensional Data.''}
\emph{WIREs Computational Statistics} 14 (4): e1573.
\url{https://doi.org/10.1002/wics.1573}.

\bibitem[\citeproctext]{ref-loperfido}
Loperfido, Nicola. 2018. {``Skewness-Based Projection Pursuit: A
Computational Approach.''} \emph{Computational Statistics \& Data
Analysis} 120: 42--57.
https://doi.org/\url{https://doi.org/10.1016/j.csda.2017.11.001}.

\bibitem[\citeproctext]{ref-robustbase}
Maechler, Martin, Peter J. Rousseeuw, Christophe Croux, Valentin
Todorov, Andreas Ruckstuhl, Matias Salibian-Barrera, Tobias Verbeke,
Manuel Koller, Eduardo L. T. Conceicao, and Maria Anna di Palma. 2024.
\emph{Robustbase: Basic Robust Statistics}.
\url{http://robustbase.r-forge.r-project.org/}.

\bibitem[\citeproctext]{ref-mahalanobis}
Mahalanobis, Prasanta Chandra. 1936. {``On the Generalised Distance in
Statistics.''} \emph{Sankhya A} 80 (Suppl 1): 1--7.
\url{https://doi.org/10.1007/s13171-019-00164-5}.

\bibitem[\citeproctext]{ref-MAYRHOFER2023}
Mayrhofer, Marcus, and Peter Filzmoser. 2023. {``Multivariate Outlier
Explanations Using Shapley Values and Mahalanobis Distances.''}
\emph{Econometrics and Statistics}.
\url{https://doi.org/10.1016/j.ecosta.2023.04.003}.

\bibitem[\citeproctext]{ref-rousseeuw1985multivariate}
Rousseeuw, Peter J. 1985. {``Multivariate Estimation with High Breakdown
Point.''} \emph{Mathematical Statistics and Applications} 8 (283-297):
37.

\bibitem[\citeproctext]{ref-directional-outlyingness}
Rousseeuw, Peter J., Jakob Raymaekers, and Mia Hubert. 2018. {``A
Measure of Directional Outlyingness with Applications to Image Data and
Video.''} \emph{Journal of Computational and Graphical Statistics} 27
(2): 345--59. \url{https://doi.org/10.1080/10618600.2017.1366912}.

\bibitem[\citeproctext]{ref-mclust}
Scrucca, Luca, Chris Fraley, T. Brendan Murphy, and Adrian E. Raftery.
2023. \emph{Model-Based Clustering, Classification, and Density
Estimation Using {mclust} in {R}}. Chapman; Hall/CRC.
\url{https://doi.org/10.1201/9781003277965}.

\bibitem[\citeproctext]{ref-tourr}
Wickham, Hadley, Dianne Cook, Heike Hofmann, and Andreas Buja. 2011.
{``Tourr: {An} {R} {Package} for {Exploring} {Multivariate} {Data} with
{Projections}.''} \emph{Journal of Statistical Software} 40 (2).
\url{https://doi.org/10.18637/jss.v040.i02}.

\bibitem[\citeproctext]{ref-tourr-cran}
Wickham, Hadley, Dianne Cook, Nicholas Spyrison, Ursula Laa, H. Sherry
Zhang, and Stuart Lee. 2024. \emph{Tour Methods for Multivariate Data
Visualisation}. \url{https://CRAN.R-project.org/package=tourr}.

\bibitem[\citeproctext]{ref-ferrn}
Zhang, H. Sherry, Dianne Cook, Ursula Laa, Nicolas Langrené, and
Patricia Menéndez. 2021. {``Visual Diagnostics for Constrained
Optimisation with Application to Guided Tours.''} \emph{The R Journal}
13 (2): 624--41. \url{https://doi.org/10.32614/RJ-2021-105}.

\end{CSLReferences}

\end{document}